\documentclass[11pt]{article}
\usepackage[T1]{fontenc}
\usepackage{graphicx} 
\usepackage{xcolor}
\usepackage{amsmath}
\usepackage{amsthm}
\usepackage{amssymb}
\usepackage[margin=1in]{geometry}
\usepackage[colorlinks=true, allcolors=blue]{hyperref}
\usepackage{natbib}

\title{A Tight Bound on Online Vertex Cover under Edge Arrivals}
\author{
Zhihao Gavin Tang\thanks{
Email: \url{tang.zhihao@mail.shufe.edu.cn}
}\\
Shanghai University of Finance and Economics
\and
Yuhao Zhang\thanks{
Email: \url{zhang_yuhao@sjtu.edu.cn}
}\\
Shanghai Jiao Tong University
}
\date{}

\newtheorem{theorem}{Theorem}
\newtheorem{lemma}{Lemma}[section]

\newtheorem{definition}[lemma]{Definition}
\newcommand{\ALG}{\textsc{Alg}}
\newcommand{\OPT}{\textsc{Opt}}

\renewcommand{\L}{\mathcal{L}}
\newcommand{\R}{\mathcal{R}}
\newcommand{\E}{\mathcal{E}}

\begin{document}

\maketitle

\begin{abstract}
We prove a tight impossibility result for online vertex cover under edge
arrivals.  No randomized integral or fractional algorithm achieves a
competitive ratio strictly below $2$ against an oblivious adversary, even on
bipartite graphs.  Since the standard algorithm that takes both endpoints of
every uncovered edge is $2$-competitive, this settles the optimal ratio.  Our
proof is a direct reduction from the recent breakthrough blueprint framework
of Assadi, Jiang, and Xiang.
\end{abstract}

\section{Introduction}
\label{sec:introduction}

In the edge-arrival online vertex cover problem, the vertex set is known in
advance and edges arrive one at a time.  The algorithm maintains a set of
selected vertices that must cover every edge seen so far.  This set can only
grow: selected vertices can never be removed.  
Adding both endpoints of each
uncovered edge is $2$-competitive because the triggering edges form a matching~\citep{DemangePaschos05}.
Can fractional decisions or randomization improve this factor?

Primal--dual algorithms yield the same competitive ratios for online matching and online vertex cover. Throughout this paper, we state competitive ratios using the convention that they are at least $1$.\footnote{The matching literature typically uses the reciprocal convention. Under that convention, the ratios discussed here are $1-1/e$, $\Gamma^*\approx 0.526$, and $1/2$.}
Under one-sided vertex arrivals, both problems admit the tight competitive ratio $e/(e-1)$ \citep{KVV90}; for vertex cover, the corresponding lower bound follows from ski rental~\citep{WangWong15}. Under general vertex arrivals, the best algorithms for both problems achieve a ratio of $1/\Gamma^*\approx 1.901$ \citep{WangWong15,TangZhang24}, and \citet{Tang26} recently proved this ratio tight for matching, even on bipartite graphs. Under edge arrivals, the standard algorithms achieve a ratio of $2$ for both problems, and \citet{GamlathEtAl19} proved this ratio tight for matching, again even on bipartite graphs.

These common upper bounds arise because the corresponding algorithms construct matching and vertex-cover solutions simultaneously. However, lower bounds for matching do not automatically transfer to vertex cover. Even in the one-sided vertex-arrival model, where the two problems have the same tight ratio, the vertex-cover lower bound comes from a completely different source, namely, a reduction from ski rental. Before our work, only in the one-sided vertex-arrival model was the algorithmic ratio matched by a vertex-cover lower bound. Under general vertex arrivals, the best known lower bound for vertex cover is only about $1.753$ \citep{WangWong15}. Under edge arrivals, it remained open whether a fractional or randomized algorithm could achieve a competitive ratio strictly smaller than $2$.

\paragraph{Our result.}
We close this gap.  For every deterministic fractional online vertex cover
algorithm and every $\varepsilon>0$, we construct a finite bipartite instance
with competitive ratio at least $2-\varepsilon$.  Taking expectations
coordinatewise gives the same bound for randomized integral and fractional
algorithms against an oblivious adversary.  Thus the optimal ratio is exactly
$2$.

\paragraph{Reduction and proof idea.}
Our key technical contribution is a direct reduction from proper blueprints,
introduced by Assadi, Jiang, and Xiang \citep{AssadiJiangXiang26a}.  From a
blueprint of value $\nu$, cyclic shifts hide an independent set $I$ of size
$2\nu n$.  Matching batches force at least $\nu n$ weight onto $I$ before it is
identified.  We then attach a private leaf to every vertex outside $I$.  The
offline optimum takes these $2(1-\nu)n$ vertices, while the algorithm also
retains its weight on $I$, giving ratio $(2-\nu)/(2(1-\nu))$.  Existing
blueprints have $\nu\to 2/3$ \citep{AssadiJiangXiang26b}, so this ratio tends to
$2$.

\paragraph{Relation to semi-streaming.}
Whether the factor $2$ can be improved for streaming vertex cover was raised
explicitly in prior work \citep{AssadiEtAl19EDCS,DerakhshanEtAl25}; the latter
also shows that constant-party one-way communication cannot establish a
factor-$2$ streaming lower bound.  The two blueprint papers derive
semi-streaming vertex cover lower bounds:
approximately $1.790$ in general graphs and the tight factor $2$ in bipartite
graphs \citep{AssadiJiangXiang26a,AssadiJiangXiang26b}, but they do not imply our
result.  Semi-streaming algorithms can postpone their output but have limited
memory; online algorithms have unlimited memory but make irrevocable choices.
Technically, the streaming reductions use ERS graph expansions, communication
complexity, and information theory; ours applies a blueprint directly.
Conceptually, streaming pays for forgetting absent edges, while online
computation pays for being unable to retract weight placed before the hidden
independent set is revealed.

\paragraph{Positive results in restricted settings.}
Positive results have been attained with extra assumptions on the input. For graphs of maximum degree $d$, \citet{BuchbinderSegevTkach17} gave a primal--dual algorithm with competitive ratio $2-2^{1-d}$ for both online fractional matching and online fractional vertex cover. On forests, they also obtained a randomized $9/5$-competitive matching algorithm. The dual cover used in that analysis, however, is constructed only after the full input is known and therefore does not yield an online vertex-cover algorithm. More recently, \citet{BaligacsEtAl26} gave an $11/6$-competitive algorithm for online fractional vertex cover on forests. Under a different restriction, \citet{LeeSingla20} obtained ratio $2-2^{1-s}$ for fractional bipartite vertex cover with $s$ arrival batches, where each batch is revealed in full before the corresponding irrevocable update.

\section{Preliminaries}

We first formally define the edge-arrival online vertex cover problem. The vertex set $V$ is known in advance, while the edges between these vertices arrive online. After the $t$-th edge arrives, the algorithm maintains a set $S_t\subseteq V$ that covers every edge revealed so far. The sets must be increasing: $S_{t-1}\subseteq S_t$. Thus, selected vertices can never be removed. In particular, if a new edge $(u,v)$ has neither endpoint in $S_{t-1}$, the algorithm must add at least one of $u$ and $v$. The objective is to minimize the size of the final set.

\paragraph{Fractional relaxation.}
In the fractional relaxation, after the $t$-th arrival the algorithm maintains a vector $x^{(t)}\in[0,1]^V$. These vectors are coordinatewise nondecreasing: $x^{(t-1)}_v\leq x^{(t)}_v$ for every $v\in V$. They must also satisfy $x^{(t)}_u+x^{(t)}_v\geq 1$ for every edge $(u,v)$ revealed by time $t$. Writing $x$ for the final vector, the objective is to minimize $\sum_{v\in V}x_v$.

The integral problem is a special case of the fractional problem obtained by restricting $x\in \{0,1\}^V$. Moreover, against an oblivious adversary, any randomized integral or fractional algorithm can be simulated by a deterministic fractional algorithm, without regard to computational efficiency, whose objective value equals the expected objective value of the randomized algorithm.

We evaluate online algorithms using competitive analysis. A fractional online vertex cover algorithm $\ALG$ is said to be $\Gamma$-competitive if, for every input sequence $\sigma$,
$$
\frac{\ALG(\sigma)}{\OPT(\sigma)} \leq \Gamma,
$$
where $\ALG(\sigma)$ denotes the objective value achieved by $\ALG$ on $\sigma$, and $\OPT(\sigma)$ denotes the objective value of an optimal offline solution for the same input.

\section{A Tight Lower Bound}
\label{sec:lower-bound}

In this section, we prove that no fractional online algorithm can achieve a competitive ratio strictly smaller than $2$ for the edge-arrival online vertex cover problem. By the equivalence discussed in the previous section, the same lower bound applies to randomized integral and fractional algorithms against an oblivious adversary.

\begin{theorem}
\label{thm:main}
For every deterministic fractional online vertex cover algorithm $\ALG$ and every $\varepsilon>0$, there exists a finite input sequence $\sigma$ such that
$$
\frac{\ALG(\sigma)}{\OPT(\sigma)}
\geq 2-\varepsilon.
$$
Moreover, the underlying graph of $\sigma$ is bipartite.
\end{theorem}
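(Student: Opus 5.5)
We will prove Theorem~\ref{thm:main} by reducing from the proper blueprints of Assadi, Jiang, and Xiang~\citep{AssadiJiangXiang26a}, taking as given the existence result of~\citep{AssadiJiangXiang26b}: for every $\delta>0$ there is a proper blueprint of value $\nu\ge 2/3-\delta$. Informally, such a blueprint is a sequence of batches of perfect (or near-perfect) matchings on a bipartite vertex set of size $2n$ per side. Any vertex subset $I$ of size $2\nu n$ can play the role of the ``hidden'' independent set. The edges incident to $I$ in the batches are arranged so that, until the last batch is revealed, the instance is consistent with many shifted placements of $I$.

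Fix a deterministic fractional algorithm $\ALG$ and a target $\varepsilon$. First choose $\delta$ small and a blueprint of value $\nu$, so that
\[
\frac{2-\nu}{2(1-\nu)}\ \ge\ 2-\varepsilon/2 .
\]
This is possible because the ratio tends to $2$ as $\nu\to2/3$.

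\textbf{Step 1: instantiate over cyclic shifts.} Label the vertices by $\mathbb{Z}_N$, with one cyclic index per side. For each shift $s$, let $\sigma_s$ be the blueprint's edge sequence with indices translated by $s$, so that the hidden independent set becomes $I_s=I+s$. Because $\ALG$ is deterministic and online, its fractional vector after any prefix depends only on that prefix. This lets us average over $s$.

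\textbf{Step 2: show that $\ALG$ puts weight on the hidden set.} We aim to prove that, for some shift $s$, the final weight $\sum_{v\in I_s}x_v$ at the end of $\sigma_s$ is at least $(\nu-\delta')n$. We argue batch by batch. Each batch is a matching $M$, so every edge forces $x_u+x_v\ge1$, and the total weight on the endpoints of $M$ is at least $|M|$. The blueprint's properness condition should say that a $\nu$ fraction of this forced weight lands, on average over shifts, on vertices that will eventually lie in $I_s$. The reason is that, at the time the weight is placed, the prefix does not determine the shift, so it cannot tell which matched endpoints belong to $I_s$. Averaging over $s$ and using the symmetry then yields an expectation bound, and hence the existence of a good $s$. \emph{This is the main obstacle.} The difficulty is making precise the indistinguishability of prefixes across shifts, and then turning the blueprint's combinatorial value $\nu$ into a lower bound of $\nu n$ on the weight forced onto $I$. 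I would first try to express properness as a linear inequality on the vectors $x^{(t)}$ and sum it over all shifts and all batches.

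\textbf{Step 3: finish with leaves.} After $\sigma_s$ ends, append one private leaf edge $(v,\ell_v)$ for each $v\notin I_s$; the leaves $\ell_v$ are fresh vertices placed on the opposite side, so the graph stays bipartite. The set $V\setminus I_s$ has $2(1-\nu)n$ vertices and covers every edge, because $I_s$ is independent. Hence
\[
\OPT\ \le\ 2(1-\nu)n .
\]
Now account for $\ALG$. Each leaf edge forces $x_v+x_{\ell_v}\ge1$, so these disjoint pairs contribute at least $2(1-\nu)n$ in total. The weight already on $I_s$ is disjoint from these pairs and contributes at least $(\nu-\delta')n$. Therefore
\[
\ALG\ \ge\ (2-\nu-\delta')n ,
\]
and the ratio is at least $(2-\nu-\delta')/(2(1-\nu))\ge 2-\varepsilon$ once $\delta'$ is small enough. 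The resulting instance is finite and bipartite, which proves the theorem. Taking expectations coordinatewise transfers the bound to randomized algorithms, as in the preliminaries.
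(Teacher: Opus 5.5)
Your Step 3 is exactly the paper's completion lemma (Lemma~\ref{lem:completion}), and the overall plan is the right one: shift a blueprint, average over shifts, then attach private leaves. The gap is Step 2, which you leave open yourself. Step 3 also relies on a claim you never prove, namely that $I_s$ is independent in the revealed graph.

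Both gaps have the same root. A blueprint is not an edge sequence. It is a matching $\E=\E^{(1)}\mathbin{\dot{\cup}}\cdots\mathbin{\dot{\cup}}\E^{(P)}$ on vertices indexed by $[C]^P$, and $I$ is specifically the shifted set of its endpoints, not an arbitrary set of size $2\nu n$. You never say which edges the adversary actually reveals.

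The missing construction is as follows. Let $z\in\mathbb{Z}_C^P$ act coordinatewise. Your single index $\mathbb{Z}_N$ only helps if the labeling respects this prefix structure. In phase $t$, for each $(\L(a),\R(b))\in\E^{(t)}$, reveal the whole perfect matching pairing $L_0((a\oplus z)_{<t}\circ c)$ with $R_0((b\oplus z)_{<t}\circ c)$ over all $c\in[C]^{P-t+1}$.

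This batch depends only on $z_{<t}$. So, conditional on everything $\ALG$ has seen through phase $t$, the suffix $z_{\ge t}$ is still uniform. Hence $L_0(a\oplus z)$ and $R_0(b\oplus z)$ are \emph{each} uniformly distributed over their respective sides of the batch. They are never themselves a batch edge, since properness forbids it.

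Linearity of expectation then turns the batch's covering constraints into $\mathbb{E}_Z[x^{(t)}_{L_0(a\oplus Z)}+x^{(t)}_{R_0(b\oplus Z)}]\ge 1$. Monotonicity carries this bound to the final vector. Summing over the matching $\E$ gives expected hidden weight at least $|\E|=\nu n$, with no loss $\delta'$.

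Properness plays a different role from the one you assign it. It is not a linear inequality on the $x^{(t)}$, and it has nothing to do with the weight bound. It is precisely what makes $I(z)$ independent. Each batch edge un-shifts to the pair $(\L(a_{<t}\circ c'),\R(b_{<t}\circ c'))$ with $c'=c\ominus z_{\ge t}$. One of these two vertices is isolated in the blueprint, so its image lies outside $I(z)$.

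Your heuristic that a $\nu$ fraction of each batch's forced weight lands on $I_s$ is also not how the accounting works. Each blueprint edge contributes one expected unit, obtained from its own batch through the uniform marginals above. Other batch edges may carry weight on $I(z)$ too, but the argument does not use it.

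Without this batch construction and the two resulting facts (forced expected weight and independence), Steps 1 and 2 remain a heuristic. Step 3 and the final calculation are correct once those facts are in place.
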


\subsection{Proof Overview}

We prepare two vertex sets $L_0$ and $R_0$ with $|L_0|=|R_0|=n$, and reveal a bipartite edge set $E_0$ between them. The main part of the construction produces a hidden set
$$
I=L_I\cup R_I,
$$
where $L_I\subseteq L_0$, $R_I\subseteq R_0$, and $|L_I|=|R_I|=\phi$, with the following two properties:
\begin{enumerate}
    \item $I$ is an independent set in $(L_0\cup R_0,E_0)$.
    \item After processing $E_0$, the fractional solution maintained by $\ALG$ satisfies
    $$
    \sum_{v\in I}x_v\geq \phi=\frac{|I|}{2}.
    $$
\end{enumerate}

We say that the algorithm \emph{fully covers} a vertex set $U$ if $\sum_{v\in U}x_v\geq |U|/2$. Thus, the second property states that $I$ is fully covered.

The term ``hidden'' means that the identities of the vertices in $I$ are concealed from the algorithm while the edges in $E_0$ arrive. Before the algorithm can determine the hidden set, it may have already assigned substantial irrevocable fractional weight to its vertices. This unnecessary weight is the main source of the lower bound.

For every vertex $v\in L_0\cup R_0$, we reserve a distinct auxiliary vertex $v'$. All auxiliary vertices are included in the vertex set known to the algorithm from the beginning, but remain isolated while $E_0$ is revealed. Once the hidden set has been determined, we reveal a pendant edge $(v,v')$ for every $v\notin I$.

The following lemma formalizes this completion step.

\begin{lemma}
\label{lem:completion}
Suppose that $E_0$ is a bipartite edge set on $L_0\cup R_0$, where $|L_0|=|R_0|=n$. Suppose further that, after processing $E_0$, there exists an independent set $I=L_I\cup R_I$ satisfying
$$
|L_I|=|R_I|=\phi
\qquad\text{and}\qquad
\sum_{v\in I}x_v\geq \phi.
$$
Then the input can be extended, while preserving bipartiteness, so that
$$
\frac{\ALG}{\OPT}
\geq
\frac{2n-\phi}{2(n-\phi)}.
$$
\end{lemma}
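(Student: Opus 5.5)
We extend the input by revealing, after $E_0$, a pendant edge $(v,v')$ for every $v\in(L_0\cup R_0)\setminus I$, where $v'$ is the reserved auxiliary vertex of $v$. The argument has three parts: check bipartiteness, upper-bound $\OPT$ by an explicit cover, and lower-bound $\ALG$ by adding the forced weight on $I$ to the pendant constraints.

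\emph{Bipartiteness.} Each $v'$ is isolated during $E_0$ and receives exactly one edge. So we may place $v'$ on the side opposite to $v$, namely in the right part if $v\in L_0$ and in the left part if $v\in R_0$. The extended graph is still bipartite.

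\emph{Upper bound on $\OPT$.} We claim that $C=(L_0\cup R_0)\setminus I$ is a vertex cover of the final graph. Every edge of $E_0$ has at least one endpoint outside $I$, because $I$ is independent in $(L_0\cup R_0,E_0)$. Every pendant edge $(v,v')$ has $v\notin I$ by construction. Hence
$$\OPT\le |C|=2n-2\phi.$$
Since $\OPT$ is the optimal offline solution, this integral cover also bounds the fractional optimum.

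\emph{Lower bound on $\ALG$.} Let $x$ be the final vector of $\ALG$. By monotonicity, $\sum_{v\in I}x_v$ is still at least $\phi$ at the end. For each $v\notin I$, the pendant edge $(v,v')$ forces $x_v+x_{v'}\ge 1$. The pairs $\{v,v'\}$ for $v\notin I$ are pairwise disjoint and disjoint from $I$, so these contributions add up:
$$\ALG\ \ge\ \sum_{v\in I}x_v+\sum_{v\notin I}(x_v+x_{v'})\ \ge\ \phi+(2n-2\phi)=2n-\phi.$$

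\emph{Combining.} Dividing the two bounds gives
$$\frac{\ALG}{\OPT}\ \ge\ \frac{2n-\phi}{2(n-\phi)}.$$
We assume $\phi<n$ so the ratio is finite. If instead $\phi=n$, then all of $L_0\cup R_0$ is independent, so $E_0=\emptyset$ and $\OPT=0$; this degenerate case needs a separate remark, and with $\phi<n$ none is needed.

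There is no real obstacle here. The only point needing care is disjointness: the weight on $I$ counted by the hypothesis must not be double-counted with the pendant pairs. That holds because the auxiliary vertices are distinct and private to each $v$, and the pendant edges are attached only to vertices outside $I$.
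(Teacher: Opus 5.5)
Your proposal is correct and matches the paper's proof essentially step for step. Both attach pendant edges to every vertex outside $I$, use $(L_0\cup R_0)\setminus I$ as an explicit cover of size $2(n-\phi)$, and add the disjoint pendant constraints to the irrevocable weight on $I$ to get $\ALG\ge 2n-\phi$; your aside on the degenerate case $\phi=n$ is a harmless addition the paper leaves implicit, since in its application $\phi=\nu n<n$.
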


\begin{proof}
For every $v\in(L_0\cup R_0)\setminus I$, reveal the pendant edge $(v,v')$. Let $E_1$ denote the set of these edges, and place each auxiliary vertex $v'$ on the side opposite to $v$.

Since $I$ is independent in $(L_0\cup R_0,E_0)$, the set $(L_0\cup R_0)\setminus I$ covers every edge in $E_0\cup E_1$. Therefore,
$$
\OPT
\leq
|(L_0\cup R_0)\setminus I|
=
2(n-\phi).
$$

The edges in $E_1$ are pairwise vertex-disjoint, so their covering constraints force the algorithm to assign total weight at least $2(n-\phi)$ to their endpoints. Moreover, the fractional weight already assigned to $I$ cannot decrease. Since the endpoints of $E_1$ are disjoint from $I$, we obtain
$$
\ALG
\geq
2(n-\phi)+\phi
=
2n-\phi.
$$
Consequently,
$$
\frac{\ALG}{\OPT}
\geq
\frac{2n-\phi}{2(n-\phi)}.
$$
\end{proof}

Thus, it remains to construct an independent set with $\phi$ close to $2n/3$ that is fully covered.

\subsection{The Blueprint Construction}

To construct such a hidden independent set, we use the blueprint technique introduced in \cite{AssadiJiangXiang26a}.

\begin{definition}[Proper blueprint]
Let $P\ge 1$ and $C\ge 2$ be integers. A \emph{blueprint} $\mathcal{B}=(\L,\R,\E)$ consists of two vertex sets
$$
\L=\{\L(a):a\in[C]^P\}
\qquad\text{and}\qquad
\R=\{\R(a):a\in[C]^P\},
$$
together with a bipartite edge set partitioned as
$$
\E
=
\E^{(1)}
\mathbin{\dot{\cup}}
\cdots
\mathbin{\dot{\cup}}
\E^{(P)}.
$$
For $a\in[C]^P$, let $a_{<p}$ denote the prefix consisting of its first $p-1$ coordinates, and let $\circ$ denote concatenation. \footnote{For two vectors $u=(u_1,\ldots,u_k)$ and $v=(v_1,\ldots,v_\ell)$, their concatenation is $u\circ v=(u_1,\ldots,u_k,v_1,\ldots,v_\ell)$.}

The blueprint is \emph{proper} if the following conditions hold:
\begin{enumerate}
    \item $\E$ is a matching.
    
    \item For every edge $(\L(a),\R(b))\in\E^{(p)}$ and every $c\in[C]^{P-p+1}$, at least one of $\L(a_{<p}\circ c)$ and $\R(b_{<p}\circ c)$ is isolated in the blueprint.
\end{enumerate}
Here, a blueprint vertex is \emph{isolated} if it is not incident to any edge in $\E$. The \emph{value} of the blueprint is
$$
\nu(\mathcal{B})=\frac{|\E|}{C^P}.
$$
\end{definition}

We use the following result.

\begin{theorem}[Blueprint theorem~{\cite[Theorem~3]{AssadiJiangXiang26b}}]
\label{thm:blueprint}
There exists a sequence of finite proper blueprints whose values converge to $2/3$.
\end{theorem}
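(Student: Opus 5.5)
The statement is quoted from \cite[Theorem~3]{AssadiJiangXiang26b} and the rest of the paper uses it as a black box. If I had to reprove it, I would build the blueprints recursively in the number of phases $P$. I would aim for a value recursion of the form
$$
\nu_{P+1}=\tfrac12+\tfrac14\,\nu_P,
\qquad \nu_1=\tfrac12 ,
$$
whose fixed point is $2/3$. The base case $P=1$ is immediate. With a single phase, properness says that for every index $c\in[C]$ at most one of $\L(c),\R(c)$ is matched. So I would match $\L(c)$ to $\R(c')$, where $c$ ranges over the first half of $[C]$ and $c'$ over the second half. This gives value $1/2$, and properness is easy to check.

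For the inductive step I would use the new first coordinate to split $[C]$ into blocks, and read off the key asymmetry from the definition. Phase-$1$ edges constrain all indices globally, whereas edges of a later phase $p$ only constrain indices sharing the prefix $a_{<p}$ (respectively $b_{<p}$). The plan has three parts.
\begin{itemize}
\item Phase $1$ will be a \emph{crossing} matching that uses the first coordinate. It pairs $\L$-vertices whose first coordinate lies in one block with $\R$-vertices whose first coordinate lies in a complementary block, and it covers about half of all indices. This mirrors the base case.
\item The phases $2,\dots,P+1$ will place a scaled copy of a $P$-phase proper blueprint inside the vertices left unmatched by phase $1$. These copies are indexed by the remaining $P$ coordinates, with the first coordinate fixed, so each copy lives inside one prefix class.
\item I would tune the block sizes so that the copies can use a $1/4$ fraction of the index space without touching any vertex matched in phase $1$. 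This is what yields $\tfrac12+\tfrac14\nu_P$.
\end{itemize}

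There are then two checks. The first is that $\E$ is still a matching. This follows because the copies are supported only on vertices isolated in phase $1$, and distinct copies lie in distinct prefix classes. The second is the properness condition. For a phase-$1$ edge, the condition ranges over all $c\in[C]^{P+1}$, so I must verify that for every index, $\L(c)$ or $\R(c)$ is isolated in the \emph{whole} blueprint, including the later copies. For a later-phase edge, the condition reduces to properness of the inner copy, because the prefix $a_{<p}$ already pins down the first coordinate.

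I expect the main obstacle to be the phase-$1$ check. The inner copies add edges on both the $\L$ and $\R$ sides, and these may land on a pair of indices $(\L(c),\R(c))$ where phase $1$ needed one of them isolated. My first attempt would be to assign the blocks asymmetrically. Inner copies on the $\L$ side would only use first-coordinate blocks whose $\R$ twins are phase-$1$ matched or otherwise guaranteed isolated, and symmetrically on the $\R$ side. This may cost a constant factor relative to $\tfrac14$. If it does, I would instead let the recursion carry a slightly stronger invariant than properness, namely that the isolated vertices are arranged in a prescribed pattern over the first coordinate, and re-derive the recursion from that invariant.

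Finally, letting $P\to\infty$ and taking $C$ large enough to make the block splits exact yields values converging to $2/3$.
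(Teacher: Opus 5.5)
The paper does not prove this statement; it imports it from \cite[Theorem~3]{AssadiJiangXiang26b}, so your opening remark is all the paper relies on. Your reconstruction, however, fails at exactly the point you flagged, and the loss there is a hard cap, not a constant factor. For $p=1$ the prefix $a_{<1}$ is empty. A single edge in $\E^{(1)}$ therefore forces, for \emph{every} $c\in[C]^{P+1}$, that at least one of $\L(c)$, $\R(c)$ is isolated in the whole blueprint. Since $\E$ is a matching, exactly $2|\E|$ vertices are non-isolated, and there is at most one per index. So $2|\E|\le C^{P+1}$, i.e.\ $\nu\le 1/2$, for any blueprint with $\E^{(1)}\neq\emptyset$. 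Consequently a blueprint of value above $1/2$ must have an empty first phase. In particular, a phase-$1$ crossing matching of value $1/2$ already leaves exactly one non-isolated vertex at every index, so there is no room for any inner copy. No asymmetric choice of blocks and no strengthened invariant can get around this counting bound.

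The recursion $\nu_{P+1}=\tfrac12+\tfrac14\nu_P$ is also unattainable at its first step. It predicts $\nu_2=5/8$, but every proper blueprint with $P=2$ has value at most $1/2$. Suppose $\E^{(1)}=\emptyset$, and let $S_i,T_j\subseteq[C]$ be the non-isolated positions in $\L(i,\cdot)$ and $\R(j,\cdot)$. A phase-$2$ edge between these blocks forces $S_i\cap T_j=\emptyset$, so $|S_i|+|T_j|\le C$. Summing over all edges gives $\sum_i|S_i|^2+\sum_j|T_j|^2\le C|\E|$, and Cauchy--Schwarz then yields $|\E|\le C^2/2$.

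So values above $1/2$ require at least three coordinates and an empty global phase. The gain must come from the interplay of the \emph{local} constraints at several deeper levels. For example, with $P=3$ one can already beat $1/2$ as follows. Dense blocks have their full rows matched at phase~$2$ into sparse blocks whose patterns are complementary. Their half-density rows are matched at phase~$3$ to the complementary half-density rows of dense blocks on the other side.

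Your layering puts the global constraint on top, which is the wrong direction. Reaching $2/3$ requires the specific construction of \cite{AssadiJiangXiang26b}, and your outline does not supply it.
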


Fix a finite proper blueprint $\mathcal{B}=(\L,\R,\E)$ with parameters $P$ and $C$. We prepare two vertex sets
$$
L_0=\{L_0(a):a\in[C]^P\}
\qquad\text{and}\qquad
R_0=\{R_0(a):a\in[C]^P\}.
$$
Thus, $|L_0|=|R_0|=C^P$. The symbols $L_0(a)$ and $R_0(a)$ refer to vertices in the online instance, whereas $\L(a)$ and $\R(a)$ refer to vertices in the blueprint.

We would like to use the endpoints of the blueprint edges as the hidden independent set. However, mapping $\L(a)$ directly to $L_0(a)$ and $\R(b)$ directly to $R_0(b)$ would reveal the hidden set from the beginning. To conceal this mapping, we use a cyclic shift.

Throughout the construction, we identify $[C]$ with the cyclic group $\mathbb{Z}_C$, and use $\oplus$ and $\ominus$ to denote coordinate-wise addition and subtraction modulo $C$. For every shift $z\in[C]^P$, we map
$$
\L(a)\longmapsto L_0(a\oplus z)
\qquad\text{and}\qquad
\R(b)\longmapsto R_0(b\oplus z).
$$
The hidden set associated with $z$ is
$$
I(z)
=
\bigcup_{(\L(a),\R(b))\in\E}
\{L_0(a\oplus z),R_0(b\oplus z)\}.
$$
Since $\E$ is a matching, $I(z)$ contains exactly $|\E|$ vertices from each side. In particular,
$$
|I(z)|=2|\E|.
$$

We next define the edge sequence associated with a fixed shift $z$. For every phase $t\in[P]$ and every blueprint edge
$$
e=(\L(a),\R(b))\in\E^{(t)},
$$
the adversary reveals the batch
$$
B_e^{(t)}(z)
=
\left\{
\left(
L_0\bigl((a\oplus z)_{<t}\circ c\bigr),
R_0\bigl((b\oplus z)_{<t}\circ c\bigr)
\right)
:
c\in[C]^{P-t+1}
\right\}.
$$
Each batch forms a perfect matching between two corresponding subcubes. Notice that $B_e^{(t)}(z)$ depends only on the prefix $z_{<t}$.

The phases are revealed in the order $1,2,\ldots,P$. Within each phase, the batches and their edges are revealed in an arbitrary fixed order. If an edge has already appeared in an earlier batch, it need not be revealed again. Let $E_0(z)$ denote the set of all distinct edges revealed during the construction.

For the analysis, let $Z$ be chosen uniformly at random from $[C]^P$. Uppercase $Z$ always denotes this random shift, whereas lowercase $z$ denotes a fixed realization. Let $x^{(t)}$ denote the solution maintained by $\ALG$ after phase $t$, and let $x$ denote its solution after all $P$ phases. When $Z$ is random, these vectors are also random.

\paragraph{The Toy Example.}
We illustrate the construction using a simple example. Let $C=2$ and $P=1$, and identify $[2]$ with $\{0,1\}$. Consider the blueprint
$$
\E=\E^{(1)}=\{(\L(0),\R(1))\}.
$$
This blueprint is proper. For $c=0$, the vertex $\R(0)$ is isolated, while for $c=1$, the vertex $\L(1)$ is isolated. Its value is
$$
\nu(\mathcal{B})
=
\frac{|\E|}{C^P}
=
\frac{1}{2}.
$$

Since $P=1$, the batch associated with the only blueprint edge contains all possible labels in $[2]$. Therefore, regardless of the shift $z$, the adversary reveals
$$
B_e^{(1)}(z)
=
\{(L_0(0),R_0(0)),(L_0(1),R_0(1))\}.
$$

The shift determines which endpoints form the hidden set. If $z=0$, then
$$
I(0)=\{L_0(0),R_0(1)\},
$$
whereas if $z=1$, then
$$
I(1)=\{L_0(1),R_0(0)\}.
$$
Thus, the two possible shifts correspond exactly to the two candidate hidden independent sets.

Moreover,
$$
\begin{aligned}
\mathbb{E}_Z
\left[
\sum_{v\in I(Z)}x_v
\right]
&=
\frac{1}{2}
\left(
x_{L_0(0)}
+x_{R_0(1)}
+x_{L_0(1)}
+x_{R_0(0)}
\right) \\
&=
\frac{1}{2}
\left(
x_{L_0(0)}
+x_{R_0(0)}
+x_{L_0(1)}
+x_{R_0(1)}
\right) \\
&\geq 1.
\end{aligned}
$$
The inequality follows from the covering constraints of the two revealed edges. Hence, one of the two shifts produces a hidden independent set that is fully covered.

This example captures the two roles of a proper blueprint. The released batch forces sufficient fractional weight onto the hidden endpoints, while the properness condition guarantees that the hidden endpoints form an independent set.

\subsection{Analysis}

We first show that the batch corresponding to each blueprint edge forces one unit of expected fractional weight onto its two hidden endpoints.

\begin{lemma}
\label{lem:expected-mass}
For every phase $t$ and every edge $(\L(a),\R(b))\in\E^{(t)}$,
$$
\mathbb{E}_Z
\left[
x^{(t)}_{L_0(a\oplus Z)}
+
x^{(t)}_{R_0(b\oplus Z)}
\right]
\geq 1.
$$
\end{lemma}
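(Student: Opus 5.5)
The plan is to condition on the prefix $Z_{<t}$ and average over the remaining coordinates $Z_{\geq t}$. The batch $B_e^{(t)}(Z)$ depends only on $Z_{<t}$. The algorithm is deterministic, and the edges revealed through phase $t$ depend only on $Z_{<t}$: phase $s\le t$ batches depend only on $Z_{<s}$. Hence, conditioned on $Z_{<t}=w$, the vector $x^{(t)}$ is a fixed vector. Meanwhile, the suffix $Z_{\geq t}$ is still uniform over $[C]^{P-t+1}$ and independent of $w$.

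Next, fix $w$. Write $a=a_{<t}\circ a'$ and $b=b_{<t}\circ b'$. Then
$$
a\oplus Z=(a_{<t}\oplus w)\circ(a'\oplus Z_{\geq t})
\qquad\text{and}\qquad
b\oplus Z=(b_{<t}\oplus w)\circ(b'\oplus Z_{\geq t}).
$$
As $Z_{\geq t}$ ranges uniformly over $[C]^{P-t+1}$, both $c_1=a'\oplus Z_{\geq t}$ and $c_2=b'\oplus Z_{\geq t}$ are individually uniform over $[C]^{P-t+1}$, since translation is a bijection. By linearity of expectation, the conditional expectation therefore equals
$$
\frac{1}{C^{P-t+1}}\sum_{c\in[C]^{P-t+1}}\Bigl(x^{(t)}_{L_0((a\oplus w)_{<t}\circ c)}+x^{(t)}_{R_0((b\oplus w)_{<t}\circ c)}\Bigr).
$$
The key point is that linearity lets us decouple the two summands. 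Each one is then re-indexed separately by the same $c$, even though $c_1\ne c_2$ in general.

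Each summand pair is exactly an edge of the batch $B_e^{(t)}(w)$, which has been revealed by the end of phase $t$. Either it appeared in this batch or earlier, and monotonicity keeps the constraint satisfied. So each term is at least $1$, the average is at least $1$, and taking expectation over $w$ gives the claim.

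The main obstacle is the measurability and independence step: we need $x^{(t)}$ to be determined by $Z_{<t}$ alone. This holds because every batch in phases $1,\dots,t$ depends only on prefixes $Z_{<s}$ with $s\le t$, so $x^{(t)}$ is a function of $Z_{<t}$. Once that is in place, the rest is the re-indexing argument above.
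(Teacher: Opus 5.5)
Your proposal is correct and follows the paper's proof essentially step for step: you condition on $Z_{<t}$ so that $x^{(t)}$ is fixed, use that $a_{\geq t}\oplus Z_{\geq t}$ and $b_{\geq t}\oplus Z_{\geq t}$ are each uniform to re-index the two marginal sums separately by linearity, and then apply the covering constraint to each edge of $B_e^{(t)}$. Your appeal to monotonicity is unnecessary, since feasibility must hold for every revealed edge at every time, but it does no harm.
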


\begin{proof}
Condition on the prefix $Z_{<t}$. All edges revealed through phase $t$ depend only on $Z_{<t}$. Hence, under this conditioning, the input observed by $\ALG$ and the resulting vector $x^{(t)}$ are fixed.

The suffix $Z_{\geq t}$ remains uniformly distributed over $[C]^{P-t+1}$. Therefore, $a_{\geq t}\oplus Z_{\geq t}$ and $b_{\geq t}\oplus Z_{\geq t}$ are each uniformly distributed over $[C]^{P-t+1}$. Reindexing the two marginal sums separately and using linearity of expectation, we obtain
\begin{align*}
    &\mathbb{E}_{Z_{\geq t}}
    \left[
        x^{(t)}_{L_0(a\oplus Z)}
        +
        x^{(t)}_{R_0(b\oplus Z)}
        \,\middle|\,
        Z_{<t}
    \right] \\
    ={}&
    \frac{1}{C^{P-t+1}}
    \sum_{c\in[C]^{P-t+1}}
    \left(
        x^{(t)}_{L_0((a\oplus Z)_{<t}\circ c)}
        +
        x^{(t)}_{R_0((b\oplus Z)_{<t}\circ c)}
    \right).
\end{align*}
For every $c\in[C]^{P-t+1}$, the pair
$$
\left(
    L_0((a\oplus Z)_{<t}\circ c),
    R_0((b\oplus Z)_{<t}\circ c)
\right)
$$
belongs to the batch $B_e^{(t)}(Z)$. By the end of phase $t$, this edge has been revealed. Therefore, feasibility of the fractional vertex cover implies
$$
x^{(t)}_{L_0((a\oplus Z)_{<t}\circ c)}
+
x^{(t)}_{R_0((b\oplus Z)_{<t}\circ c)}
\geq 1.
$$
It follows that
$$
\mathbb{E}_{Z_{\geq t}}
\left[
    x^{(t)}_{L_0(a\oplus Z)}
    +
    x^{(t)}_{R_0(b\oplus Z)}
    \,\middle|\,
    Z_{<t}
\right]
\geq 1.
$$
Taking the expectation over $Z_{<t}$ proves the lemma.
\end{proof}

The properness condition ensures that the hidden set remains independent.

\begin{lemma}
\label{lem:hidden-independent}
For every shift $z\in[C]^P$, the set $I(z)$ is independent in $(L_0\cup R_0,E_0(z))$.
\end{lemma}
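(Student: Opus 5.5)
We argue by contradiction on a single edge. Fix $z$. Every edge of $E_0(z)$ lies in some batch $B_e^{(t)}(z)$ for a phase $t$ and a blueprint edge $e=(\L(a),\R(b))\in\E^{(t)}$. Such an edge has the form
$$
\bigl(L_0((a\oplus z)_{<t}\circ c),\,R_0((b\oplus z)_{<t}\circ c)\bigr)
$$
for some $c\in[C]^{P-t+1}$. Suppose both endpoints lie in $I(z)$. We then pull back through the shift. Since coordinatewise addition of $z$ commutes with taking prefixes, we have $(a\oplus z)_{<t}\circ c=(a_{<t}\circ c')\oplus z$, where $c'=c\ominus z_{\ge t}\in[C]^{P-t+1}$. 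Similarly, $(b\oplus z)_{<t}\circ c=(b_{<t}\circ c')\oplus z$ with the \emph{same} $c'$, because both sides use the same suffix $c$ and the same shift $z_{\ge t}$. This common $c'$ is the key point.

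Next we translate membership in $I(z)$ back to the blueprint. By definition, $L_0(u\oplus z)\in I(z)$ exactly when $\L(u)$ is an endpoint of some edge of $\E$. The reason is that $u\mapsto u\oplus z$ is a bijection on $[C]^P$, so $L_0(a'\oplus z)=L_0(u\oplus z)$ forces $a'=u$; the same holds on the right side. Hence both endpoints lying in $I(z)$ means that neither $\L(a_{<t}\circ c')$ nor $\R(b_{<t}\circ c')$ is isolated in the blueprint. This contradicts condition~2 of properness, applied to the edge $(\L(a),\R(b))\in\E^{(t)}$ and the suffix $c'$. It follows that no revealed edge has both endpoints in $I(z)$.

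We also note that left and right vertices never coincide, so the only edges to check are the batch edges above. Edges that are not revealed again because they appeared earlier still belong to some batch, so they are covered by the same argument. The only real care point is the bookkeeping identity $(a\oplus z)_{<t}\circ c=(a_{<t}\circ c')\oplus z$ together with the fact that $c'$ is shared by both endpoints. Once this is written out coordinatewise, the rest is immediate.
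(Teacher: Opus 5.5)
Your proof is correct and follows the paper's approach: you pull each batch edge back through the inverse shift with the common suffix $c'=c\ominus z_{\geq t}$, then use properness to get an isolated blueprint endpoint whose image lies outside $I(z)$. You also spell out two points the paper leaves implicit: the prefix/shift commutation identity, and the use of bijectivity of $u\mapsto u\oplus z$ to conclude that the image of an isolated vertex is not in $I(z)$.
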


\begin{proof}
Consider an edge in the batch associated with $(\L(a),\R(b))\in\E^{(t)}$. It has the form
$$
\left(
L_0\bigl((a\oplus z)_{<t}\circ c\bigr),
R_0\bigl((b\oplus z)_{<t}\circ c\bigr)
\right)
$$
for some $c\in[C]^{P-t+1}$.

Under the inverse shift, its endpoints correspond to
$$
\L(a_{<t}\circ c')
\qquad\text{and}\qquad
\R(b_{<t}\circ c'),
$$
where $c'=c\ominus z_{\geq t}$. By properness, at least one of these two blueprint vertices is isolated. Its image under the shift is therefore not contained in $I(z)$.

Hence, every revealed edge has at least one endpoint outside $I(z)$, and $I(z)$ is independent.
\end{proof}

We can now combine the expected contributions of all blueprint edges.

\begin{lemma}
\label{lem:hidden-covered}
There exists a shift $z\in[C]^P$ such that, after all $P$ phases,
$$
\sum_{v\in I(z)}x_v
\geq
\frac{|I(z)|}{2}.
$$
\end{lemma}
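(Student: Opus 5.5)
The plan is an averaging argument over the random shift $Z$, combining Lemma~\ref{lem:expected-mass} with the monotonicity of the fractional solution. First, since $\E$ is a matching, the hidden set decomposes as a disjoint union of pairs: for every shift $z$,
$$
\sum_{v\in I(z)}x_v=\sum_{t=1}^{P}\sum_{(\L(a),\R(b))\in\E^{(t)}}\left(x_{L_0(a\oplus z)}+x_{R_0(b\oplus z)}\right).
$$
The map $a\mapsto a\oplus z$ is a bijection, so distinct blueprint endpoints go to distinct online vertices. Because $\E$ is a matching, no blueprint vertex lies on two edges. Hence no vertex of $I(z)$ is counted twice, and every vertex of $I(z)$ is counted once.

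Next, I will handle the timing. Lemma~\ref{lem:expected-mass} bounds $x^{(t)}$, the solution after phase $t$, while we need the final $x$. The vectors are coordinatewise nondecreasing over time, so $x_v\ge x^{(t)}_v$ for every $v$ and $t$. Therefore, for each edge $(\L(a),\R(b))\in\E^{(t)}$,
$$
\mathbb{E}_Z\left[x_{L_0(a\oplus Z)}+x_{R_0(b\oplus Z)}\right]\ge \mathbb{E}_Z\left[x^{(t)}_{L_0(a\oplus Z)}+x^{(t)}_{R_0(b\oplus Z)}\right]\ge 1.
$$
Here $x$ is the solution after all $P$ phases, which is before any completion edges are added.

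Summing over all blueprint edges and using linearity of expectation then gives
$$
\mathbb{E}_Z\left[\sum_{v\in I(Z)}x_v\right]\ge |\E|=\frac{|I(Z)|}{2},
$$
since $|I(z)|=2|\E|$ for every $z$. A random variable cannot lie strictly below its mean for every outcome. So some realization $z$ satisfies $\sum_{v\in I(z)}x_v\ge |I(z)|/2$, which is the claim.

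The only delicate point is the role of $Z$. The algorithm is deterministic, so for each fixed $z$ the final vector $x$ is a deterministic function of the input $E_0(z)$. The expectation is therefore just an average over the $C^P$ adversarial instances, and the existence of a good $z$ follows from averaging. Lemma~\ref{lem:expected-mass} has already done the real work: it conditions on the prefix $Z_{<t}$ and uses the fact that the phase-$t$ input is independent of the suffix $Z_{\ge t}$. So this step reduces to monotonicity plus linearity of expectation, and I expect no real obstacle beyond checking that the decomposition into matching pairs is disjoint.
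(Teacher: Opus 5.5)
Your proposal is correct and follows the paper's proof. The paper's argument has the same steps: the disjoint decomposition of $I(Z)$ into endpoint pairs of the blueprint matching, monotonicity to pass from $x^{(t)}$ to the final $x$, Lemma~\ref{lem:expected-mass} with linearity of expectation, and averaging over shifts using that $|I(z)|=2|\E|$ does not depend on $z$.
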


\begin{proof}
Since $\E$ is a matching, different blueprint edges correspond to distinct vertices in $I(Z)$. Moreover, fractional variables can only increase. Lemma~\ref{lem:expected-mass} therefore gives
$$
\begin{aligned}
\mathbb{E}_Z
\left[
\sum_{v\in I(Z)}x_v
\right]
&\geq
\sum_{t=1}^P
\sum_{(\L(a),\R(b))\in\E^{(t)}}
\mathbb{E}_Z
\left[
x^{(t)}_{L_0(a\oplus Z)}
+
x^{(t)}_{R_0(b\oplus Z)}
\right] \\
&\geq
\sum_{t=1}^P|\E^{(t)}|
=
|\E|
=
\frac{|I(Z)|}{2}.
\end{aligned}
$$
Since $|I(Z)|=2|\E|$ is independent of $Z$, some fixed shift $z$ satisfies the desired inequality.
\end{proof}

\subsection{Proof of the Main Theorem}

We now complete the proof of the main theorem.

\begin{proof}[Proof of Theorem~\ref{thm:main}]
By Theorem~\ref{thm:blueprint}, there exists a finite proper blueprint whose value $\nu$ is arbitrarily close to $2/3$. Choose such a blueprint so that
$$
\frac{2-\nu}{2(1-\nu)}
\geq
2-\varepsilon.
$$
Let $n=C^P$ and $\phi=|\E|=\nu n$.

By Lemma~\ref{lem:hidden-covered}, there exists a shift $z\in[C]^P$ for which $I(z)$ is fully covered. By Lemma~\ref{lem:hidden-independent}, the same set is independent. Lemma~\ref{lem:completion} therefore gives
$$
\frac{\ALG}{\OPT}
\geq
\frac{2n-\phi}{2(n-\phi)}
=
\frac{2-\nu}{2(1-\nu)}
\geq
2-\varepsilon.
$$
The chosen blueprint is finite, so the resulting input sequence is finite. The construction also preserves bipartiteness. This completes the proof.
\end{proof}

\section*{Declaration on the Use of AI Tools}

The authors began investigating this problem in February 2026 and made partial
progress beyond the previously known $1.753$ lower bound for the general
vertex-arrival model.  The authors then suggested that the recent blueprint
framework of Assadi, Jiang, and Xiang might apply.  Prompted by this suggestion,
OpenAI's GPT-5.6 Sol formulated the reduction yielding the tight factor-$2$
lower bound proved in this paper and assisted with drafting the manuscript.
The authors independently verified the reduction, proof, citations, and
exposition and can directly defend every aspect of the submission.  The authors
assume responsibility for all content.

\bibliographystyle{plainnat}
\bibliography{references}

\end{document}